\newif\ifim \imtrue  
\let\doendproof\endproof
\renewcommand\endproof{~\hfill\qed\doendproof}
\newcommand{\e}[1]{\emph{#1}}
\renewcommand{\cal}[1]{\ensuremath{\mathcal{#1}}\xspace}
\renewcommand\-{\textrm{-}}
\renewcommand{\O}[1]{\ensuremath{O(#1 \log n)}\xspace}
\renewcommand\P{\ensuremath{P}\xspace}
\renewcommand{\d}[2]{\ensuremath{\mathsf{d}(#1,#2)}\xspace}
\newcommand{\dpr}[2]{\ensuremath{\mathsf{d'}(#1,#2)}\xspace}
\newcommand{\spm}[1]{\ensuremath{\textrm{SPM}(#1)}\xspace}
\renewcommand\ss{\ensuremath{s^*}\xspace}
\renewcommand\tt{\ensuremath{t^*}\xspace}
\newcommand{\uone}{\ensuremath{u_1}\xspace}\newcommand{\utwo}{\ensuremath{u_2}\xspace}\newcommand{\uthree}{\ensuremath{u_3}\xspace}
\newcommand{\vone}{\ensuremath{v_1}\xspace}\newcommand{\vtwo}{\ensuremath{v_2}\xspace}\newcommand{\vthree}{\ensuremath{v_3}\xspace}
\newcommand\pone{\ensuremath{\pi_1}\xspace}\newcommand\ptwo{\ensuremath{\pi_2}\xspace}\newcommand\pthree{\ensuremath{\pi_3}\xspace}
\newcommand{\A}{\ensuremath{\cal{A}}\xspace}
\newcommand\bae{\cite{bae}\xspace}
\title{Geodesic Diameter of a Polygonal Domain\\ in \O{n^{4}} Time}
\author{Mikko Koivisto \and Valentin Polishchuk}
\institute{Helsinki Institute for Information Technology HIIT\\ 
Department of Computer Science, University of Helsinki, Finland\\
\texttt{\{firstname.lastname\}@cs.helsinki.fi}}
\begin{document}\maketitle
\begin{abstract} We show that the geodesic diameter of a polygonal domain with $n$ vertices can be computed in \O{n^{4}} time by considering $O(n^{3})$ candidate diameter endpoints; the endpoints are a subset of vertices of the overlay of shortest path maps from vertices of the domain.\end{abstract}
\section{Introduction}
For many geometric shortest path problems efficient solutions have been developed both for simple polygons and for polygonal domains with holes, as well as for surfaces of polytopes in 3D \cite{survey}. A notable exception is computing the diameter. The problem is non-trivial already in simple polygons, where it was examined decades ago \cite{chazelleCutting,suriSimple} culminating in a linear-time algorithm \cite{hsMatrix}. Similarly, for convex polytopes polynomial-time algorithms have been known since 1990's \cite{orourke,star}; the current best running time is \O{n^7} (for a polytope with $n$ vertices) \cite{atlas}. However, for polygonal domains with holes, no algorithms existed until very recently. (The survey \cite{survey} mentions "brute-force" results, but no details are given.)

In \bae, Bae et al.\ presented an \O{n^{7.73+\varepsilon}}-time algorithm for computing the diameter of an $n$-vertex polygonal domain \P.\footnote{Earlier, in \cite{japan}, the same authors announced an $O(n^{5+\frac{10}{11}+\varepsilon})$-time algorithm.} Each of the diameter endpoints may be a vertex of \P, a point in the relative interior of \P's edge, or an interior point of \P. If either of the endpoints is a vertex, the diameter can be found in \O{n^2} by computing the shortest path maps \cite{survey,hs} from all vertices. However, Bae et al.\ \bae exhibit examples where the diameter is realized by non-vertex points on the boundary of \P or by a pair of interior points. According to \bae, "This observation also shows an immediate difficulty in devising any exhaustive algorithm since the search space like $\partial\P$ or the whole domain \P is not discrete."

In this paper, we show that there actually does exist a discrete set of candidate diameter endpoints: an $O(n^{3})$-size subset of vertices of the overlay of shortest path maps from vertices of the domain. For each candidate, the farthest point of \P can be found in \O{n} time by building the shortest path map.

\section{Preliminaries}

Let \P be a polygonal domain with $n$ vertices. The \e{shortest path map} from a
source point $p\in P$ \cite{survey} is the decomposition of \P into cells with
the same unique
combinatorial type of the shortest paths from $p$ to the points in a cell (the
combinatorial type of a path is the sequence of vertices visited by the path).
We use SPM as a shorthand for "shortest path map"; the SPM from a specific
source $p$ is denoted by \spm{p}. A \e{bisector} in SPM is the boundary between
two cells, i.e., an edge of the SPM. The vertices of the SPM are of three types:
vertices of \P, intersections of bisectors with sides of \P, and \e{triple
points} where 3 or more bisectors meet. The complexity of the SPM is $O(n)$, and
it can be built in \O{n} time \cite{hs}.

Let $\cal{A}$ be the overlay of the SPMs from all vertices of \P. The overlay
can be built in $O(n^4)$ time since there are $O(n)$ edges in the SPM from each
of the $n$ vertices, and two edges intersect $O(1)$ times. We will use the term
\e{node} for a vertex of $\cal{A}$, to distinguish nodes from vertices of \P.

Let \d{p}{q} denote the geodesic distance within \P between points $p,q$, and
let $\ss,\tt$ be the endpoints of the diameter of \P: $\d{\ss}{\tt}=\max_{p,q\in
P}\d{p}{q}$. As in \bae, we separately consider the cases when both of $\ss,\tt$
are interior, when one of $\ss,\tt$ is on the boundary, and when both are on the
boundary. The hardest, bottleneck case is when both diameter endpoints are
interior. In this case, the algorithm of \bae makes use of the following:
\begin{lemma}\label{lem5} \e{\bae}
If both \ss and \tt are in the interior of \P, there are at least 5
homotopically different shortest \ss-\tt paths. In addition, there exist at
least 3 distinct vertices $u_1,u_2,u_3$ that are adjacent to \ss on the shortest
\ss-\tt paths; similarly, there are at least 3 distinct vertices $v_1,v_2,v_3$
adjacent to \tt on the shortest paths: $\d{\ss}{\tt}=|\ss u_i|+\d{u_i}{\tt}=|\tt
v_j|+\d{v_j}{\ss}$ for $i,j=1,2,3$.
\end{lemma}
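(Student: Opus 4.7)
Since both $\ss$ and $\tt$ lie in the interior of $\P$, each homotopy class $\pi$ of shortest $\ss$-$\tt$ paths begins with a straight segment from $\ss$ to some vertex $u_\pi$ and ends with a segment from some vertex $v_\pi$ to $\tt$; the middle of the path depends only on $\pi$. Hence, for $(p,q)$ in a small neighbourhood of $(\ss,\tt)$, the length of the shortest path of class $\pi$ is $l_\pi(p,q)=|p\,u_\pi|+C_\pi+|v_\pi\,q|$ for a class-dependent constant $C_\pi$. This function is smooth at $(\ss,\tt)$, with gradient $\nabla l_\pi=(\hat e^s_\pi,\hat e^t_\pi)\in\mathbb R^2\times\mathbb R^2$, where $\hat e^s_\pi$ is the unit vector from $u_\pi$ toward $\ss$ and $\hat e^t_\pi$ is the unit vector from $v_\pi$ toward $\tt$. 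Let $\Pi$ be the finite set of \emph{active} classes, those realizing $\d{\ss}{\tt}=l_\pi(\ss,\tt)$. By continuity, $\d{p}{q}=\min_{\pi\in\Pi}l_\pi(p,q)$ in a neighbourhood of $(\ss,\tt)$.

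The maximality of $\d{\ss}{\tt}$ yields a first-order necessary condition: no direction $\vec D\in\mathbb R^4$ may satisfy $\nabla l_\pi\cdot\vec D>0$ for every $\pi\in\Pi$, for otherwise moving $(\ss,\tt)$ in direction $\vec D$ would strictly increase every $l_\pi$ and hence their minimum. Hyperplane separation makes this equivalent to $0\in\mathrm{conv}\{\nabla l_\pi:\pi\in\Pi\}$. Restricting to $\vec D=(\vec d,0)$ reduces this to $0\in\mathrm{conv}\{\hat e^s_\pi\}\subset\mathbb R^2$. Two unit vectors enclose $0$ only if they are antipodal, which would force $u_1,\ss,u_2$ collinear; a perturbation of $\ss$ perpendicular to this line then vanishes to first order but strictly increases both $|\ss u_1|$ and $|\ss u_2|$ to second order (each Euclidean distance has positive-definite Hessian off the segment direction), contradicting maximality. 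Hence at least three distinct first vertices $u_1,u_2,u_3$ occur, and by symmetry at least three distinct last vertices $v_1,v_2,v_3$.

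For the five-classes bound I would lift the argument to $\mathbb R^4$. If $|\Pi|\le 4$, then $0\in\mathrm{conv}\{\nabla l_\pi\}$ together with the gradients being nonzero forces linear dependence, so some nonzero $\vec D_0\in\mathbb R^4$ satisfies $\nabla l_\pi\cdot\vec D_0=0$ for every $\pi\in\Pi$. Along $\vec D_0$ every $l_\pi$ is first-order stationary, so I pass to the Hessian. Each $H_\pi$ is block-diagonal with kernel $\{(\alpha\hat e^s_\pi,\beta\hat e^t_\pi):\alpha,\beta\in\mathbb R\}$ and is positive definite on its orthogonal complement; for $\vec D_0=(\vec D^s,\vec D^t)$ not to make any $l_\pi$ strictly increase to second order, $\vec D^s$ must be parallel to every $\hat e^s_\pi$ and $\vec D^t$ parallel to every $\hat e^t_\pi$. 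But the three distinct $u_\pi$'s produced above give at least two distinct lines through the origin in $\mathbb R^2$, forcing $\vec D^s=0$, and symmetrically $\vec D^t=0$; so $\vec D_0=0$, a contradiction, and hence $|\Pi|\ge 5$.

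The main obstacle I expect is handling the second-order analysis cleanly in the degenerate configurations: first, verifying that the two-dimensional first-order condition at each endpoint really produces \emph{three} distinct vertices rather than merely an antipodal pair, and second, showing that these three directions suffice to eliminate every candidate $\vec D_0$ in the four-dimensional step. Once this interplay between the ``$0$ in the convex hull'' condition and the ``in every active Hessian's kernel'' condition is settled, the dimension counts in $\mathbb R^2$ and $\mathbb R^4$ directly yield the promised bounds of three vertices at each endpoint and five active homotopy classes.
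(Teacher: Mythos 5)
The paper does not actually prove this lemma: it is quoted from Bae et al.\ \bae, and the authors explicitly defer the formal proof to that reference, offering only the degrees-of-freedom intuition that your proposal tries to make rigorous. Your first-order setup ($\d{p}{q}=\min_{\pi}l_\pi(p,q)$ locally over the active classes, separation giving $0\in\mathrm{conv}\{\nabla l_\pi\}$, restriction to the $\ss$-block giving $0\in\mathrm{conv}\{\hat e^s_\pi\}$) and your treatment of the three-distinct-vertices claim are sound: with only two distinct first-bend directions they must be antipodal, and the perpendicular perturbation strictly increases \emph{every} active $l_\pi$ at second order, hence increases their minimum --- a correct contradiction, and a faithful formalization of the paper's informal remark that with two first bends one could move \ss away from both.

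The five-paths half has a genuine gap. From local maximality of $F=\min_\pi l_\pi$ along a direction $\vec D_0$ orthogonal to all active gradients you may only conclude that \emph{at least one} active $\pi$ has $\vec D_0^{\top}H_\pi\vec D_0=0$: if one $l_\pi$ stays flat while the others grow, the minimum does not grow, and maximality is not violated. Your argument instead demands that \emph{no} $l_\pi$ strictly increase, i.e.\ that $\vec D_0$ lie in \emph{every} kernel, and only that stronger, unjustified requirement yields $\vec D_0=0$. The weaker, correct condition is genuinely satisfiable: for any single active class $\pi$ the direction $(\hat e^s_\pi,-\hat e^t_\pi)$ is orthogonal to $\nabla l_\pi$, lies in $\ker H_\pi$, and in fact leaves $l_\pi$ \emph{exactly} constant (move \ss radially away from $u_\pi$ and \tt radially toward $v_\pi$ by equal amounts); one can choose four gradients containing $0$ in their convex hull whose common orthogonal complement is spanned by exactly such a direction, so all of your first- and second-order necessary conditions can hold with $|\Pi|=4$. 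What is missing is the treatment of this flat-direction case --- for instance, noting that moving along it yields a nearby diametral pair at which the strictly-increasing classes have dropped out, and re-running the argument there, or the dimension count on the equality manifold $\{l_{\pi_1}=\cdots=l_{\pi_k}\}$ that Bae et al.\ carry out in their Theorem~2. Until that case is handled, the bound of five homotopy classes is not established.
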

The above properties are analogous to the properties of the diameters of
polyhedral surfaces in $\mathbb{R}^3$, established by O'Rourke and Schevon
\cite{orourke}. For a formal proof of the properties refer to \bae. An intuitive
explanation of the 5-path property is that each of $\ss,\tt$ has 2 coordinates
(degrees of freedom), and the equation that two shortest paths are the same
length takes away one degree of freedom; thus, 5 path length equalities (i.e., 4
equations) take away all 4 degrees of freedom, pinning \ss and \tt. The
3-vertices property is even more intuitive: were there only 2 vertices adjacent
to \ss through which the shortest \ss-\tt paths go, one could move \ss away from
both of them, increasing the diameter.

Based on Lemma~\ref{lem5}, Bae et al.\ \bae scroll through all 5-tuples of vertices of \P, and for each 5-tuple look at each of the $O(n^2)$ cells in the overlay of the SPMs from the 5 vertices; within one cell, a constant-size system of constant-degree equations is solved to obtain candidate diameter endpoints. For each of the $O(n^7)$ candidate pairs of points, a two-point shortest path query is performed in \O{n^{8/11}} time \cite{2point}; the data structure for the queries can be built in $O(n^{7+8/11+\varepsilon})$ time which is the ultimate running time of the algorithm of \bae.

Our algorithm identifies an $O(n^{3})$-size set of candidate diameter endpoints. For each candidate, we simply find the farthest point in \P with the \O{n}-time algorithm of \cite{hs}.

\subsection{Overview of our approach}

We say that a vertex $u$ is the \e{first bend} of an \ss-\tt path if the path
starts from the segment $\ss u$; similarly $v$ is the \e{last} bend if the path
ends with $v\tt$. (Recall that none of $\ss,\tt$ is a vertex of \P.) We start from
proving a little variation of Lemma~\ref{lem5}: If both $\ss,\tt$ are interior,
then there exist \e{exactly} 3 vertices $\{u_1,u_2,u_3\}$, \e{exactly} 3
vertices $\{v_1,v_2,v_3\}$, and \e{at least} 5 homotopically different shortest
\ss-\tt paths such that $\{u_1,u_2,u_3\}$ are the first bends of the 5 paths,
and $\{v_1,v_2,v_3\}$ are the last bends. (We only need the existence of the
vertices and the paths; algorithmically we do not scroll through all triples of
vertices as potential bend points.)

We then define a graph $G$ on $\ss,\tt,u_1,u_2,u_3,v_1,v_2,v_3$ that reflects
the way \ss and \tt are connected with the 5 paths. A simple case analysis of
the connectivity of $U=\{u_1,u_2,u_3\}$ and $V=\{v_1,v_2,v_3\}$ in $G$ shows
that each of $\ss,\tt$ is either a triple point in the SPM from a vertex of \P,
or is at the intersection of bisectors in SPMs from two vertices (i.e., at a
node of $\cal{A}$).

Moreover, in the latter case, the vertices of $G$ are in a special relation; we prove that there are only $O(n^{3})$ sets of vertices that can be in the relation. Thus there are only $O(n^{3})$ candidate diameter endpoints; for each, farthest point in \P can be found in \O{n} time.

\subsection{Triple-point-diameter-end -- an easy case}

Suppose that there exists a diameter whose (at least one) endpoint is a triple
point in the SPM from one of the vertices of \P; we say that such point is a
\e{triple-point-diameter-end}. In this case the diameter can be computed in
\O{n^3} time since the total number of triple points in all SPMs in $O(n^2)$,
and for each triple point the farthest point of \P can be found in \O{n} time.
In what follows we will assume that there is no diameter, one endpoint of which
is a triple-point-diameter-end, since if it is the case we can compute it in
\O{n^3 \log n} time.

\subsection{An imaginary perturbation}

Lemma~\ref{lem5} states that the set of the first bends of the (at least 5) shortest paths defining the diameter, has \e{at least} 3 vertices. The statement does not exclude the possibility that there exist, say, 5 distinct vertices each being the first bend of one of the 5 (or 6, or 7, or more) different shortest \ss-\tt paths. (In fact, it is easy to concoct instances with an arbitrarily large set of possible first bends of the diameters.) The next lemma shows that there actually exist \e{exactly} 3 vertices that serve as first bends for at least 5 shortest \ss-\tt paths.
\begin{lemma}\label{pert}There exist 3 vertices $u_1,u_2,u_3$ and 5 homotopically different shortest \ss-\tt paths such that the set of the first bends of the 5 paths is $\{u_1,u_2,u_3\}$.\end{lemma}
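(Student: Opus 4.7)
My plan is to combine Lemma~\ref{lem5} with a Carath\'eodory-style reduction at \ss and a first-order ``imaginary'' perturbation argument. Lemma~\ref{lem5} supplies a family of at least five homotopically distinct shortest \ss-\tt paths and a set $U$ of at least three vertices serving as first bends. If $|U|=3$ the claim is immediate, so I henceforth assume $|U|\ge 4$.

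The central geometric fact I would exploit is that \ss locally maximizes \d{\cdot}{\tt} in the interior of \P. An infinitesimal displacement $\ss\mapsto\ss+\delta$ changes the length of any shortest path with first bend $u$ by $\delta\cdot(\ss-u)/|\ss u|$ to first order, so the local-max property forces that for every direction $\delta$ at least one $u\in U$ makes this quantity non-positive; equivalently, the origin lies in the convex hull of the planar unit vectors $\{(\ss-u)/|\ss u|:u\in U\}$. Carath\'eodory in $\mathbb{R}^2$ then picks three first bends $u_1,u_2,u_3\in U$ whose corresponding unit vectors already contain the origin in their hull. These are the witnesses the lemma demands.

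The remaining task is to exhibit five homotopically distinct shortest paths whose first bends lie entirely in $\{u_1,u_2,u_3\}$. Assuming for contradiction that fewer than five of the paths guaranteed by Lemma~\ref{lem5} use only these three bends, some first bend $u^\star\in U\setminus\{u_1,u_2,u_3\}$ hosts a shortest path $\pi^\star$. The ``imaginary perturbation'' step would use the strict Carath\'eodory coefficients $\lambda_1,\lambda_2,\lambda_3>0$ satisfying $\sum\lambda_i(\ss-u_i)/|\ss u_i|=0$: the triple $(u_1,u_2,u_3)$ certifies the local maximum at \ss on its own, so substituting $\pi^\star$ for one of the three $u_i$-paths in the 4-DOF argument behind Lemma~\ref{lem5} would leave the system under-determined in a way that contradicts $(\ss,\tt)$ being a diameter. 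Combined with the symmetric reduction at \tt (producing three last bends $v_1,v_2,v_3$), this forces the five paths of Lemma~\ref{lem5} to be realizable using first bends exactly in $\{u_1,u_2,u_3\}$.

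The main obstacle is this last step: naive pigeonhole fails for a distribution such as $(2,1,1,1)$ of five paths among four first bends, so the perturbation has to do real work~--~either producing an additional homotopy class supported by the Carath\'eodory triple, or exhibiting a motion of \ss that strictly increases \d{\ss}{\tt}. The technical heart of the proof will be the bookkeeping of how shortest paths through first bends outside $\{u_1,u_2,u_3\}$ interact with the convex-hull certificate at \ss and with the companion certificate at \tt, and in particular how the Carath\'eodory weights at \ss and at \tt can be assembled into a joint weighting in $\mathbb{R}^4$ on exactly five $(u_i,v_j)$-pairs.
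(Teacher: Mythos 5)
Your opening move --- Carath\'eodory at \ss to extract a triple $u_1,u_2,u_3$ of first bends whose convex hull captures \ss --- is exactly the paper's first step. But the second half, which you yourself flag as ``the main obstacle,'' is where the entire content of the lemma lives, and the route you sketch for it cannot work. You propose to derive a contradiction from the existence of a shortest path $\pi^\star$ through an extra first bend $u^\star$, by arguing that substituting $\pi^\star$ into the degrees-of-freedom count leaves the system defining $(\ss,\tt)$ under-determined. There is nothing to contradict: additional shortest \ss-\tt paths through additional first bends are perfectly consistent with $(\ss,\tt)$ being a diameter --- the paper explicitly notes that one can concoct instances with arbitrarily many first bends. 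The lemma does not assert that extra first bends cannot exist; it asserts that \emph{some} five witnessing paths can be chosen whose first bends all lie in one Carath\'eodory triple, and your argument never produces those five paths.

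The paper's resolution is a different device: modify the domain itself. Replace the extra vertex $u^\star$ by a tiny notch, lengthening every path that bends at $u^\star$, and prove that \ss-\tt remains a diameter of the notched domain. That proof is precisely where the standing assumption that no diameter endpoint is a triple-point-diameter-end (disposed of in a separate, easy case earlier in the paper) gets used: a pair $a,b$ that becomes diametral only after the notch would have all its shortest paths passing through $u^\star$, forcing $a$ to be a triple point of \spm{u^\star}. Iterating the notching eliminates every first bend outside $\{u_1,u_2,u_3\}$ while preserving the diameter, and then Lemma~\ref{lem5} applied to the notched domain supplies five shortest \ss-\tt paths whose first bends necessarily form the set $\{u_1,u_2,u_3\}$; since these paths avoid every notch, they are shortest paths of the original domain as well. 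Your proposal contains neither the domain modification nor the appeal to the triple-point exclusion, and without some mechanism of this kind the five required paths are never exhibited.
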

\begin{proof} 
By Lemma~\ref{lem5}, the set $S$ of the first bends of the shortest \ss-\tt paths contains at least 3 vertices; clearly, \ss is in the convex hull of $S$ (or else \ss can be moved increasing the diameter). Suppose that there are actually more than 3 vertices in $S$. By Carath\'eodory's theorem, all but 3 points in $S$ are "redundant" in the sense that there exist 3 points $u_1,u_2,u_3\in S$ such that \ss is in the convex hull of $u_1,u_2,u_3$.

Let $u$ be a vertex of \P in $S\setminus\{u_1,u_2,u_3\}$. Let $\P'$ be \P with $u$ replaced by a little notch so as to increase the length of paths that bend at $u$ (Fig.~\ref{notch}); let \dpr{p}{q} be the geodesic distance within $\P'$. Because, clearly, $u$ is not on a shortest path from \tt to any of $u_1,u_2,u_3$, we have that $\dpr{\ss}{\tt}=\d{\ss}{\tt}$.
\begin{claim} \ss-\tt is a diameter of~$\P'$.\end{claim}
\begin{proof} Suppose that there exists a pair of points $a,b\in\P'$ such that $\dpr{a}{b}>\dpr{\ss}{\tt}$. Then \e{all} $a\-b$ paths must go through $u$. (Otherwise, if an $a\-b$ path $\pi$ does not go through $u$ then the lengths of $\pi$ in \P and $\P'$ are the same, and thus $\d{a}{b}=\dpr{a}{b}>\dpr{\ss}{\tt}=\d{\ss}{\tt}$ contradicting the fact that \ss-\tt is a diameter of \P.) If $a\-b$ were not a diameter in \P (i.e., if $\d{a}{b}<\d{\ss}{\tt}$), then we could perturb $u$ by a small enough amount to make sure $a\-b$ is also not the diameter in \P'; thus, $a\-b$ must be a diameter in \P. But since there are at least 3 paths from $a$ to $b$, there are also at least 3 paths from $u$ to $a$, i.e., $a$ is a triple point in \spm{u}, or is a triple-point-diameter-end -- a contradiction.
\end{proof}
In $\P'$, none of shortest \ss-\tt paths goes via $u$; thus the set $S$ of the first bends of \ss-\tt diameters has decreased. We can continue this way, decreasing $S$ until it has only 3 vertices. By Lemma~\ref{lem5}, there still exist 5 \ss-\tt paths (in the perturbed \P). But none of the 5 paths uses any of the perturbed vertices; hence the paths are the same in the perturbed \P as they were in \P\ -- with $\{u_1,u_2,u_3\}$ as the set of the first bends.
\begin{figure}\centering
\ifim\includegraphics[page=1]{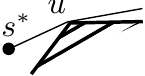}\quad$\rightarrow$\qquad\includegraphics[page=2]{notch}\fi
\caption{Perturb \P to remove shortest \ss-\tt paths not going through one of $u_1,u_2,u_3$.}
\label{notch}\end{figure}
\end{proof}
We emphasize that the perturbation in the above proof is {\bf imaginary}, not {\bf algorithmic} (symbolic, random, or otherwise). Our algorithm does not use the existence of exactly 3 or more than 3 first bend vertices. We need Lemma~\ref{pert} only to prove the correctness of our solution; the algorithm itself does not have to find the 3 vertices, nor to perturb \P to $\P'$, nor to do anything else according to the lemma.

\subsection{5 "independent" paths are necessary}

The 5-paths property ensures that whenever $(\ss,\tt)$ is a "diametrically
maximal" pair \bae (i.e., local motion of $\ss,\tt$ cannot increase the geodesic
distance between them), there exist 5 shortest \ss-\tt paths. However, the
converse does not hold automatically. That is, just mere existence of 5 shortest
\ss-\tt paths does not make $(\ss,\tt)$ diametrically maximal; it is also
important that no subset of the 5 paths could be obtained as a concatenation of
a smaller number of (sub)paths.

In particular, suppose that there exist 3 shortest \ss-\tt paths
\pone,\ptwo,\pthree having different first and last bends
\uone,\utwo,\uthree,\vone,\vtwo,\vthree. Moreover, suppose that \pone,\ptwo
intersect other than at $\ss,\tt$. Of course, two shortest \ss-\tt paths cannot
properly cross. By "intersection" we mean that \pone,\ptwo partially overlap,
sharing part of the way; i.e. that the paths have at least one common vertex $v$
(Fig.~\ref{share}).
\begin{figure}\centering
\ifim\includegraphics{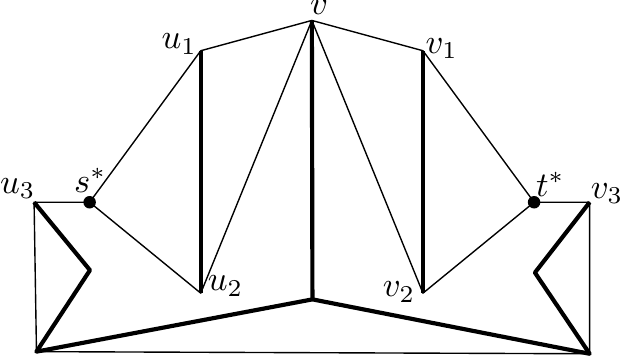}\fi
\caption{Thick segments are the obstacles. The 5 shortest paths are  $\pone=\ss\textrm{-}\uone\textrm{-}v\textrm{-}\vone\textrm{-}\tt,\ptwo=\ss\textrm{-}\utwo\textrm{-}v\textrm{-}\vtwo\textrm{-}\tt,\pthree=\ss\textrm{-}\uthree\textrm{-}\vthree\textrm{-}\tt,\pi_4=\ss\textrm{-}\uone\textrm{-}v\textrm{-}\vtwo\textrm{-}\tt,\pi_5=\ss\textrm{-}\utwo\textrm{-}v\textrm{-}\vone\textrm{-}\tt$. The figure is not to scale. 
}\label{share}
\end{figure}

In this case, there exist two more shortest \ss-\tt paths:
$\pi_4=\ss\-\uone\-v\-\vtwo\-\tt$ and $\pi_5=\ss\-\utwo\-v\-\vone\-\tt$.
Nevertheless, even though there exist 5 shortest \ss-\tt paths, \ss-\tt is
\e{not} a diameter: the 4 paths length equality,
$|\pone|=|\ptwo|=|\pi_4|=|\pi_5|$, are essentially only two equalities:
$|\ss\uone| + \d{\uone}{v} = |\ss\utwo| + \d{\utwo}{v}, |\tt\vone| +
\d{\vone}{v} = |\tt\vtwo| + \d{\vtwo}{v}$. These equalities "eat up" only one
degree of freedom from each of $\ss,\tt$ (putting \ss on the \uone-\utwo bisector
and \tt on the \vone-\vtwo bisector in \spm{v}). Equating the common length of
$\pone,\ptwo,\pi_4,\pi_5$ to the length of \pthree, takes away another degree of
freedom from the pair $(\ss,\tt)$. Still, the pair retains one degree of
freedom, and hence, $\ss,\tt$ can be simultaneously locally moved so as to
increase the diameter.

For a formalization of the above degree-of-freedom argument, one may look at the
proof of Thm.~2 (Case (II)) in \bae. The end of p.~6 in \bae considers the case
when the number of functions whose equality define $(\ss,\tt)$ is less than 5;
or equivalently, when the number of equations for \ss and \tt is less than 4.
This is exactly our case, as we have 3 equalities for the paths lengths. Bae et
al.\ prove that in this case the pair $(\ss,\tt)$ cannot give a local maximum of
the geodesic distance.

We summarize the above discussion in the following lemma:

\begin{lemma}\label{indep}Suppose that among the 5 shortest \ss-\tt paths there
exist two such that the first bends of the paths are different, the last bends
of the paths are different, and the paths intersect (overlap) other than at
$\ss,\tt$. Then \ss-\tt is not a diameter.\end{lemma}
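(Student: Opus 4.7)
The plan is to formalize the degree-of-freedom sketch that precedes the lemma. The core observation is that the hypothesized pair of overlapping paths, call them $\pone, \ptwo$ with first bends $\uone,\utwo$, last bends $\vone,\vtwo$, and a common intermediate vertex $v$, automatically yields two additional shortest $\ss$-$\tt$ paths $\pi_4 = \ss\-\uone\-v\-\vtwo\-\tt$ and $\pi_5 = \ss\-\utwo\-v\-\vone\-\tt$; yet the four paths taken together impose only two scalar length equalities.

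First I would verify that $\pi_4$ and $\pi_5$ are indeed shortest. Since $\pone,\ptwo$ both realize $\d{\ss}{\tt}$ and both pass through $v$, the subpaths $\ss\-\uone\-v$ and $\ss\-\utwo\-v$ have identical length $\d{\ss}{v}$, and analogously the tail subpaths through $\vone$ and $\vtwo$ have identical length $\d{v}{\tt}$. Hence any concatenation of an $\ss\-v$ head with a $v\-\tt$ tail has length $\d{\ss}{\tt}$ and is a shortest path. The four resulting paths yield only the two equalities
\[
|\ss\uone| + \d{\uone}{v} = |\ss\utwo| + \d{\utwo}{v}, \qquad |\tt\vone| + \d{\vone}{v} = |\tt\vtwo| + \d{\vtwo}{v},
\]
which place $\ss$ on the $\uone$-$\utwo$ bisector and $\tt$ on the $\vone$-$\vtwo$ bisector of $\spm{v}$, removing one degree of freedom from each of $\ss$ and $\tt$.

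Next I would bring in the remaining shortest path guaranteed by Lemma~\ref{lem5}; since it is homotopically different from the above four, equating its length to theirs contributes at most one further scalar equation. In total, then, the system that witnesses "these at least five paths are all shortest" imposes at most $2+1=3$ independent constraints on the four coordinates of $(\ss,\tt)$, so the feasible locus has positive dimension at $(\ss,\tt)$.

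The main obstacle, and the step I expect to be the most delicate, is to turn this dimension count into a strict local improvement of $\d{\ss}{\tt}$: in principle a positive-dimensional feasible locus might still coincide with a local maximum of the geodesic distance. This, however, is precisely the computation carried out in the proof of Thm.~2 (Case (II), end of p.~6) of \bae, where a rank analysis of the relevant gradient system shows that whenever fewer than $4$ independent equations pin down $(\ss,\tt)$, a direction of strict increase exists. I would therefore finish by appealing to that analysis, concluding that $(\ss,\tt)$ cannot be diametrically maximal and hence $\ss$-$\tt$ is not the diameter of $\P$.
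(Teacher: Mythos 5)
Your proposal is correct and follows essentially the same route as the paper: you derive the two extra shortest paths $\pi_4,\pi_5$ from the overlap at $v$, observe that the four path-length equalities collapse to two (one per endpoint), count at most three independent constraints on the four coordinates of $(\ss,\tt)$, and defer the final step --- that a deficient constraint count precludes a local maximum of the geodesic distance --- to the rank analysis in Case (II) of Theorem~2 of \bae, exactly as the paper does. Your explicit verification that $\pi_4,\pi_5$ are shortest (via the equal-length head and tail subpaths through $v$) is a welcome touch that the paper leaves implicit.
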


\section{The diametric schema and an \O{n^5} algorithm}

We now give algorithms to compute the diameter in \O{n^5} time based on
Lemmas~\ref{pert} and \ref{indep}. In the
next section, we will carry out a more careful analysis of the bottleneck case
of this algorithm to reduce the runtime to \O{n^4}.

\subsection{Both \ss and \tt are interior}

We first consider the case when both $\ss,\tt$ are interior points of \P (the bottleneck case). By Lemma~\ref{pert}, there exist 5 shortest \ss-\tt paths, and triples of vertices $U=\{u_1,u_2,u_3\}$ and $V=\{v_1,v_2,v_3\}$ such that all 5 paths have one of $U$ as the first bend and one of $V$ as the last bend. Consider the graph $G$ on $\ss,\tt,U,V$, with edges between \ss and each of $U$, between \tt and $V$, and between $u\in U$ and $v\in V$ whenever one of the 5 paths goes via $u$ and via $v$ (Fig.~\ref{schema}). We call $G$ the \e{diametric schema} because it shows how \ss and \tt are interconnected by the diameters. We define the \e{degree} of $u\in U$ to be the number of its neighbors in $V$; the degree of $v\in V$ is the number of $v$'s neighbors in $U$.
\begin{figure}\centering
\ifim (a)\includegraphics[page=1]{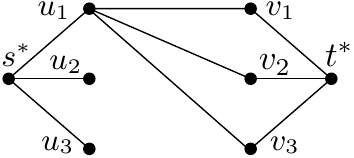}\hfil(b)\includegraphics[page=2]{schema}\hfil(c)\includegraphics[page=3]{schema}\fi
\caption{(a) If $u\in U$ has degree 3, \tt is a triple point of \spm{u}. (b) If $u_1$ and $u_2$ have the same 2 neighbors, $G$ is not planar. (c) The neighbors of $u_2$ are $v_2,v_3$; w.l.o.g., $u_3$ is connected to $v_3$.}
\label{schema}\end{figure}

We will now go through all possible interconnection patterns between $U$ and $V$. First, suppose that a vertex $u\in U$ has degree 3 (Fig.~\ref{schema}a). In this case, there are 3 homotopically different shortest paths from $u$ to \tt: via $v_1$, via $v_2$ and via $v_3$. That is, \tt is a triple-point-diameter-end.

Assume now that no vertex in $U,V$ has degree 3 in the schema. Since the total degree of the vertices in $U$ is 5, two of the vertices, say, $u_1,u_2$, are degree-2. Let $v_1,v_2$ be the neighbors of $u_1$.

First consider the case when $v_3$ is \e{not} a neighbor of $u_2$. That is, the
neighbors of $u_2$ are $v_1,v_2$, and hence, the neighbor of $u_3$ is $v_3$
(Fig.~\ref{schema}b). Then $G$ is not planar: by contracting the edges $\ss
u_3,\tt v_3$, we obtain $K_{3,3}$ (with $\ss,v_1,v_2$ in one part, and
$\tt,u_1,u_2$ in the other) as the schema's minor. However, the edges
$\ss\uone,\ss\utwo,\ss\uthree,\tt\vone,\tt\vtwo,\tt\vthree$ are
pairwise-non-crossing. Hence, there exist two shortest \ss-\tt paths such that
the first bends of the paths are different, the last bends of the paths are
different, and the paths intersect (overlap) other than at $\ss,\tt$
(Fig.~\ref{share}). By Lemma~\ref{indep}, \ss-\tt is not a diameter in this
case.


In the remaining case, $v_3$ is a neighbor of $u_2$. That is, only one of $v_1,v_2$ (say, $v_2$) is a neighbor of $u_2$; the other neighbor of $u_2$ is $v_3$ (Fig.~\ref{schema}c). This leaves two possibilities of connecting $u_3$ to $V$: either to $v_1$ or to $v_3$. Both possibilities result in the same (up to isomorphism) subgraph of the schema on $U,V$ -- a 5-edge path through $U\cup V$. W.l.o.g.\ we will assume that this path is $v_1\-u_1\-v_2\-u_2\-v_3\-u_3$, as in Fig.~\ref{schema}c.

We are almost done: it follows from the schema that \ss is on a bisector in
\spm{v_2} \e{and} is on a bisector in \spm{v_3}; i.e., \ss is a node of the
overlay $\cal{A}$. Thus, we can go through all nodes of $\cal{A}$, and find the
furthest point of \P for each. Since the overlay contains $O(n^4)$ nodes, we find
the diameter in \O{n^5} time.

\subsection{Both \ss and \tt are boundary}
When both $\ss,\tt$ are on the boundary of \P, each of them has one degree of freedom: \cite[Theorem~2]{bae} proves that there must be \e{at least} 2 vertices that serve as first bends on the shortest \ss-\tt paths, and 2 vertices that serve as last bends; moreover, to pin both \ss and \tt, there must exist at least 3 homotopically different shortest \ss-\tt paths. Similarly to Lemma~\ref{pert}, we can assume that there exist \e{exactly} two vertices $u_1,u_2$, \e{exactly} two vertices $v_1,v_2$, and 3 shortest \ss-\tt paths such that $U=\{u_1,u_2\}$ is the set of first bends of the paths and $V=\{v_1,v_2\}$ is the set of the last bends. Hence, w.l.o.g.\ the diametric schema looks as in Fig.~\ref{bd}a. This means that \ss is a vertex in \spm{v_2} as it lies on the intersection of a bisector in the map and an edge of \P. Because there are $O(n^2)$ such vertices, the diameter can be found in \O{n^3} time.
\begin{figure}\centering
\ifim (a)\includegraphics{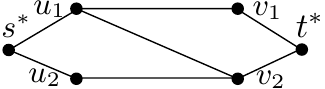}\hfil(b)\includegraphics{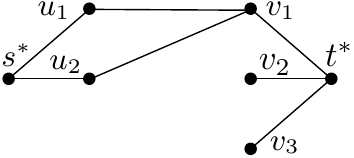} \fi
\caption{(a) The schema for the case of both $\ss,\tt$ being boundary: \ss is a vertex of \spm{v_2}. (b) \ss is boundary, \tt interior: \ss is a vertex of \spm{v_1}.}
\label{bd}\end{figure}
\subsection{\ss is boundary, \tt is interior}
Finally, if \ss is on the boundary of \P and \tt is in the interior, there exist two first-bend vertices $u_1,u_2$, three last-bend vertices $v_1,v_2,v_3$, and at least 4 shortest \ss-\tt paths -- this is proved in \cite[Theorem~2]{bae} and can also be seen by the degrees-of-freedom argument. Thus, one of $v_1,v_2,v_3$ (say, $v_1$) has degree 2 in the schema (Fig.~\ref{bd}b), and \ss is a vertex in \spm{v_1} (foot of a bisector touching an edge of \P). Because overall there are $O(n^2)$ vertices in the SPMs from vertices of \P, the diameter can be found in \O{n^3} time.


\section{Plausible vertices and an \O{n^{4}} algorithm}\label{faster} 

The bottleneck case in the algorithm given in the previous section 
is when both $\ss,\tt$ are interior and the diametric schema
is as in Fig.~\ref{schema}c. The running time turned \O{n^5} because we scrolled
through \e{all} $O(n^4)$ nodes of $\cal{A}$. However, according to the schema,
\ss cannot be at an arbitrary node: it is defined by the intersection of a
bisector between $u_1,u_2$ in \spm{v_2} and a bisector between $u_2,u_3$ in
\spm{v_3}, while $v_2,v_3$ define a bisector in \spm{u_2}. We now show that
there are only $O(n^{3})$ nodes with such properties.

Say that vertices $i,j$ are \e{neighbors} in \spm{k} if there is a bisector between $i,j$ in \spm{k}.
\begin{definition} A node $p$ of $\cal{A}$ is \e{plausible} if there exist vertices $u_1,u_2,u_3,v_2,v_3$ of \P such that
\begin{itemize}
\item $p$ is at the intersection of the bisector between $u_1,u_2$ in \spm{v_2} and the bisector between $u_2,u_3$ in \spm{v_3};
\item $v_2,v_3$ are neighbors in \spm\utwo. 
\end{itemize}\end{definition}
\begin{definition} A 5-tuple $(u_1,u_2,u_3,v_2,v_3)$ of vertices of \P is \e{plausible} if
\begin{itemize}
\item $u_1,u_2$ are neighbors in \spm\vtwo; 
\item $u_2,u_3$ are neighbors in \spm\vthree; 
\item $v_2,v_3$ are neighbors in \spm\utwo. 
\end{itemize}\end{definition}
The number of plausible nodes is not larger than the number of plausible 5-tuples (because even if a 5-tuple is plausible, the bisectors between $u_1,u_2$ and $u_2,u_3$ may not intersect at all).
\begin{lemma}There are $O(n^{3})$ plausible 5-tuples.\end{lemma}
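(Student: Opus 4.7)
The plan is to fix $u_2$ as the pivot vertex (the one appearing in all three neighbor conditions) and enumerate the remaining four vertices. For each of the $n$ choices of $u_2$, the third plausibility condition restricts $(v_2,v_3)$ to the bisector edges of $\spm{u_2}$; since each SPM has $O(n)$ bisector edges, there are $O(n^2)$ triples $(u_2,v_2,v_3)$ in total satisfying this condition. For each such triple, the first two conditions independently contribute $d_{v_2}(u_2)$ choices for $u_1$ and $d_{v_3}(u_2)$ choices for $u_3$, where $d_w(x)$ denotes the number of cells adjacent to $x$'s cell in $\spm{w}$.

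Thus the count of plausible 5-tuples equals
\[
  \sum_{u_2}\ \sum_{(v_2,v_3)\in N_{u_2}} d_{v_2}(u_2)\cdot d_{v_3}(u_2),
\]
and I would bound this by $O(n^3)$ using two global ingredients: the identity $\sum_{x,w} d_w(x)=O(n^2)$ (each of the $n$ SPMs has $O(n)$ bisector edges, each contributing $2$ to this sum) and the planarity of $\spm{u_2}$ (so that $|N_{u_2}|=O(n)$ for each $u_2$).

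The cleanest way I see to carry out the argument is to charge each summand to the bisector edge $(v_2,v_3)$ of $\spm{u_2}$, splitting the two factors between its endpoints: the factor $d_{v_2}(u_2)$ is paid by $v_2$, the factor $d_{v_3}(u_2)$ by $v_3$. After swapping the order of summation and grouping by pairs $(u_2,v)$, the inner expression becomes $d_v(u_2)\cdot\sum_{v'} d_{v'}(u_2)$, where the inner sum is over the neighbors of $v$'s cell in $\spm{u_2}$; one then invokes the global budget to cap the outer double sum.

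The hard part will be pushing the bound down from the $O(n^4)$ that a direct Cauchy--Schwarz (or AGM-style) application yields to the claimed $O(n^3)$. The three neighbor relations in the plausibility definition form a cyclic sharing pattern, and the plain AGM bound of $O(n^4)$ simply does not see that $(v_2,v_3)\in N_{u_2}$ lies in a planar graph with $O(n)$ edges rather than an arbitrary $O(n^2)$-size relation. The main obstacle is to amortise across pivots: high-degree cells of different SPMs cannot simultaneously be pairwise adjacent in $\spm{u_2}$, and it is this amortisation---rather than any purely cardinality-based estimate per $u_2$---that buys the extra factor of $n$.
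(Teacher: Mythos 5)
Your counting framework is the right one and essentially mirrors the paper's: pivot on $u_2$, charge each plausible tuple to the bisector edge $\{v_2,v_3\}$ of $\spm{u_2}$, and bound $\sum_{u_2}\sum_{\{v_2,v_3\}\in N_{u_2}}d_{v_2}(u_2)\,d_{v_3}(u_2)$, using only the two global facts you name, namely $\sum_{x,w}d_w(x)=O(n^2)$ and $|N_{u_2}|=O(n)$. But you stop exactly where the work starts. You yourself flag that a direct Cauchy--Schwarz or AM--GM step stalls at $O(n^4)$ and call the remaining reduction to $O(n^3)$ ``the hard part'' and ``the main obstacle'' without supplying it. Since that reduction \emph{is} the lemma, the proposal as written is not a proof; it is a correct reformulation of the statement plus an acknowledgement that the key inequality is missing.

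The gap can be closed without any cross-pivot amortisation beyond the two facts you already have, provided you use the sparsity of $N_{u_2}$ at the level of degeneracy rather than raw edge count. Orient the planar bisector graph $N_{u_2}$ so every vertex has out-degree at most $3$, and let $D:=\max_{v,x}d_v(x)=O(n)$ be the worst cell degree over all SPMs. Then
\[
\sum_{\{v_2,v_3\}\in N_{u_2}} d_{v_2}(u_2)\,d_{v_3}(u_2)
 \;=\;\sum_{v_2\to v_3} d_{v_2}(u_2)\,d_{v_3}(u_2)
 \;\le\; D\sum_{v_2} d_{v_2}(u_2)\cdot\mathrm{outdeg}(v_2)
 \;\le\; 3D\sum_{v_2} d_{v_2}(u_2),
\]
and summing over $u_2$ with $\sum_{u_2,v}d_v(u_2)=O(n^2)$ gives $O(D)\cdot O(n^2)=O(n^3)$. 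A symmetric AM--GM or Cauchy--Schwarz step cannot see this degeneracy, which is exactly why it loses a factor of $n$, as you observed. For comparison, the paper's own proof encodes the plausibility conditions in a $\{0,1\}$-array $b_{ijk}$, reindexes the bisectors of each $\spm{j}$ as $(m_t^j,l_t^j)$, and swaps the $j$- and $t$-sums before bounding; its final inequality $\sum_{ij}b_{ijl_t^j}=O(n)$ is stated without justification and does not appear to follow directly from the setup, so filling in the orientation/degeneracy step above is advisable regardless of which formulation you start from.
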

\begin{proof} Define an $n\times n\times n$ "bisectors adjacency" array $B=\{b_{ijk}\}$ as follows:
\[
b_{ijk} =
\begin{cases}
 1 & \textrm{if there is a bisector between vertices $i$ and $j$ in \spm{k}},\\
 0 & \textrm{otherwise}\,.
\end{cases}
\]
Here and throughout the indices run from 1 to $n$. 

A 5-tuple $(u_1,u_2,u_3,v_2,v_3)$ is plausible iff $b_{u_1u_2v_2}=b_{u_3u_2v_3}=b_{v_3v_2u_2}=1$. 
The number of plausible 5-tuples is
\[
\sum_{u_1u_2u_3v_2v_3}b_{u_1u_2v_2}b_{u_3u_2v_3}b_{v_3v_2u_2}
=\sum_{ijklm}b_{ijl}b_{kjm}b_{mlj}=\sum_j\sum_{ml}b_{mlj}\sum_{ik}b_{ijl}b_{kjm}\,.
\]
Since each SPM has linear complexity,
\[\sum_{ml}b_{mlj} = O(n)  \quad \textrm{for all $j$}\,.\]
That is, for any $j$, there are $O(n)$ pairs $(m_t^j,l_t^j)$ such that $b_{m_t^jl_t^j}=1$. Hence, the number of plausible 5-tuples is
\begin{multline}\notag
\sum_j\sum_{ml}b_{mlj}\sum_{ik}b_{ijl}b_{kjm} 
= \sum_j\sum_{t=1}^{O(n)}\sum_{ik}b_{ijl_t^j}b_{kjm_t^j} 
= \sum_{t=1}^{O(n)}\sum_{ijk}b_{ijl_t^j}b_{kjm_t^j} \le \\
\le \sum_{t=1}^{O(n)}\Big(\sum_{ij} b_{ijl_t^j}\Big)\Big(\sum_{kj} b_{kjm_t^j}\Big) 
= O(n)\cdot O(n) \cdot O(n) = O(n^3)\,.
\end{multline}
\end{proof}
Returning to our algorithm, for every node of $\cal{A}$, we can test in constant time whether it is plausible by checking the corresponding entries in $B$ (clearly, $B$ itself can be filled as the SPMs from the vertices of \P are built). For each of the $O(n^{3})$ plausible nodes, we find the farthest point in \O{n} time, and hence we have:
\begin{theorem}The diameter of \P can be found in \O{n^{4}} time.\end{theorem}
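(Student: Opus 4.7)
The plan is to assemble the pieces from the previous sections. As preprocessing, I would build \spm{v} for every vertex $v$ of \P in \O{n} time per source, totaling \O{n^2}; in the same pass, fill the $n\times n\times n$ array $B=\{b_{ijk}\}$ so that any entry can be looked up in $O(1)$. Then I would compute the overlay $\cal{A}$ in $O(n^4)$ time, which already fits within the target budget.

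The algorithm branches on where the diameter endpoints lie, following the case analysis of Section~3. If the diameter has a triple-point-diameter-end, I enumerate the $O(n^2)$ triple points across all SPMs and run the \O{n}-time farthest-point query of \cite{hs} from each, totaling \O{n^3}. If neither endpoint is a triple-point-diameter-end and at least one of them lies on $\partial\P$, then by the sub-case analyses of the previous section that endpoint must be a vertex of some \spm{v} (an intersection of a bisector with an edge of \P), so I enumerate the $O(n^2)$ such vertices and run the farthest-point query from each, again totaling \O{n^3}. The remaining, bottleneck case is when both $\ss,\tt$ are interior and the diametric schema is the one of Fig.~\ref{schema}c; here I iterate over the $O(n^4)$ nodes of $\cal{A}$ and, for each node, read off the five vertices $u_1,u_2,u_3,v_2,v_3$ determined by the two bisectors meeting at it, then test plausibility in $O(1)$ via a lookup in $B$. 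By the preceding lemma only $O(n^3)$ nodes survive the test, and for each I invoke the \O{n}-time farthest-point algorithm, spending \O{n^4} overall on this phase.

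The answer is the maximum geodesic distance produced by any candidate pair across the three phases. Correctness follows from Lemmas~\ref{lem5}, \ref{pert}, and \ref{indep} together with the schema case analysis of Section~3, which guarantees that in every situation at least one endpoint of the true diameter lies in one of the three candidate sets enumerated above. I do not expect any real obstacle: the heavy lifting — reducing the $O(n^4)$ nodes of $\cal{A}$ to $O(n^3)$ plausible candidates — has already been done in the preceding lemma, and what remains is just bookkeeping to ensure that each node of $\cal{A}$ can be processed in $O(1)$ once $B$ is available. The total runtime is dominated by either the $O(n^4)$ construction of $\cal{A}$ or the \O{n^4} farthest-point queries on the plausible nodes, both of which fit the stated bound.
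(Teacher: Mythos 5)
Your proposal is correct and follows the paper's argument essentially verbatim: it assembles the same three phases (triple-point endpoints, boundary endpoints via SPM vertices, and interior endpoints via the $O(n^3)$ plausible nodes of $\cal{A}$ filtered by constant-time lookups in $B$), with the same \O{n^4} bottleneck. No substantive differences to report.
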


\section{Conclusion}

We showed how to compute the diameter of a polygonal domain in \O{n^4} time. A faster algorithm for the problem would have to use new insights: in our algorithm, already computing the arrangement \A takes $O(n^4)$ time.

An interesting open problem is whether our ideas can be applied to diameters of polytopes in $\mathbb{R}^3$ \cite{star,atlas,orourke}. Shortest paths on polyhedral surfaces do not bend at vertices \cite{ss}; the combinatorial type of a shortest path is the sequence of edges that it visits (the path is uniquely defined by the sequence due to the unfolding property -- the shortest path becomes a line segment if the polytope is unfolded along the edges in the sequence). The 5-diameters property holds for polytopes as well \cite{orourke}. If \ss belongs to the interior of a face $f$ of the polytope, then the diameters bend on at least 3 edges bounding $f$. By a perturbation argument as in Lemma~\ref{pert}, there exists exactly three edges $u_1,u_2,u_3$ of $f$ at which the 5 paths bend; similarly, there exist 3 edges, $v_1,v_2,v_3$, of \tt's face at which the 5 diameters bend. Hence, one can define the diametric schema. However, here the analogy between polygonal domains and polytopes seems to end: were are not aware of a notion of a SPM from a polytope edge.

The existence of several homotopically different paths between the diameter endpoints suggests to study properties of the 2nd, 3rd, and in general, $K$th homotopically different shortest paths between two points in a polygonal domain. Even though algorithmically we do not use the paths to compute the diameter, it seems interesting to study their combinatorial properties. How can the "$K$th SPM" be represented and what is its complexity?
\paragraph{Acknowledgements} We thank Sergey Bereg, Atlas Cook IV, Irina Kostitsyna, Joe Mitchell, Bettina Speckmann and Kevin Verbeek for discussions and comments. 
\bibliographystyle{abbrv}\bibliography{ksp}
\end{document}